\theoremstyle{plain} 
\newtheorem{theorem}{Theorem}[section]
\newtheorem{lemma}[theorem]{Lemma}
\newtheorem{proposition}[theorem]{Proposition}
\theoremstyle{definition}
\newtheorem{definition}{Definition}[subsection]
\theoremstyle{remark}
\title{Capital allocation and risk appetite under Solvency II framework}
\author{Paolo De Angelis\\ 
Sapienza University of Rome  \\ 
Via Del Castro Laurenziano 9 Rome 00161 - Italy \\
Email: \texttt{paolo.deangelis@uniroma1.it}
\medskip\\
Ivan Granito\\
Sapienza University of Rome \\
Viale Regina Elena 295/G, 00161, Rome (Italy) \\
Email: \texttt{ivan.granito@uniroma1.it}
}
\begin{document}
\maketitle
\begin{abstract}

The aim of this paper is to introduce a method for computing the allocated Solvency II Capital Requirement (SCR) of each Risk which the company is exposed to, taking in account for the diversification effect among different risks. The method suggested is based on the Euler principle. We show that it has very suitable properties like coherence in the sense of Denault (2001) and RORAC compatibility, and practical implications for the companies that use the standard formula. 
Further, we show how this approach can be used to evaluate the underwriting and reinsurance policies and to define a measure of the Company's risk appetite, based on the capital at risk return.
\\
\\
\noindent\textbf{Keywords}\\ Solvency Capital Requirement allocation, Euler principle, Standard Formula, Return on Risk Adjusted Capital, Risk-Return profile, Underwriting policy, RORAC compatibility

\end{abstract}

\clearpage

\section*{Introduction}
The Solvency II directive requires that insurance undertakings have to calculate the \emph{Solvency Capital Requirement} taking into account for the correlation among the risk driver. This implies the existence of a diversification effect.

The evaluation of the Solvency II Capital Requirement net of diversification effect is a needful procedure to know the real capital absorption of the lines of business and to evaluate the relative financial performance. 

Academic research addresses the capital allocation for many years. Indeed they were formulated various approaches to the problem by moving from game theory or establishing the principles of coherence through axiomatic definitions for evaluating allocation methods in relation to the specific risk measures. This last line of research has provided significant applications respect to various risk measures assuming different distributions for the underlying risk variable and identifying the Euler's allocation principle as the highest performing. \\
The most important papers we refer are:
\begin{itemize}
\item Tasche (1999) \cite{Tas1} define the RORAC compatibility as the most important economic property of an allocation principle and state that, for risk measures with continuous derivatives, the unique continuous per-unit allocation principle RORAC compatible is that of Euler
\item Denault (2001) \cite{Den1} establishes the principles of coherence for an allocation principle and derive the Euler allocation principle moving from game theory
\item A. Buch, G. Dorfleitner (2008) \cite{BucDor} state that the Euler allocation principle associated with a coherent risk measure produce a coherent allocation of risk capital
\end{itemize}
The aim of this paper is to study the Solvency II capital requirement allocation for European insurance companies that calculates the SCR by means of the Standard Formula providing an allocation principle and an approach to evaluate the financial performance of the risk capital invested.
\\ \\
Our way it is to consider the SCR as risk measure noticing that, under the set of hypothesis underlying the standard formula, is coherent in the sense of Artzner (1999) \cite{Art2}. Then, by means of the Euler's allocation principle, we derive the closed formulas to calculate the allocated SCR among the risk considered in the multilevel aggregation scheme of Solvency II standard formula. Due to the cited results we know that the allocation provided is coherent in the sense of Denault (2001) \cite{Den1} \footnote{see A. Buch et G. Dorfleitner (2008) \cite{BucDor}} and RORAC compatible \footnote{see Tasche (1999) \cite{Tas1}}. Then we show that, given the RORAC compatibily, this result can be used to evaluate the financial performance of an insurance portfolio.
\\ \\
The paper is organized as follows. In the section \ref{Par:Theory} we introduce all theoretical background used in the following sections as Euler Theorem, coherence of risk measures (\cite{Art2}), coherence of risk capital allocation (\cite{Den1}), RORAC compatibility and coherence of the Euler's allocation (\cite{Tas1} and \cite{BucDor}). In the section \ref{SEC:hp FS} we describe the Standard Formula approach to SCR calculation and show a set of coherent hypothesis and definitions. In the section \ref{sec:CORPO} we define the diversification effect as variable and we provide the formulas for the SCR allocation among each single macro and micro risk included in the multilevel aggregation scheme of the standard formula. In the section \ref{sec:RISK APPETITE} we provide a mean variance model for the RORAC to evaluate the underwriting and reinsurance policies and to define the risk appetite on each sub-portfolios by solving an optimization problem. Follows the conclusion and the perspectives for future research.
\section{Theoretical framework} \label{Par:Theory}
We consider an insurance company whose portfolio of insurance contract is composed by $q$-homogeneous sub-portfolios. We define a set of random variable $\Gamma$ in the probability space $[\Omega, \Im ,\emph{\textbf{P}}]$.
The risk of the sub-portfolio \emph{s-th} ($s=1 ... q)$ is modelled by means of the generic random variable $X_s \in \Gamma$ . The total risk of the company is described with the random variable $X=\sum\limits_{s=1}^q X_s$. The company calculates its regulatory capital requirement by means of a risk measure defined as $\pi(X) : \Gamma \rightarrow \Re$. 

Note that the risk variables $X_s$ are dependent so there exist a diversification effect implied in the calculation of the capital requirement $\pi(X)$.

\subsection{Coherence of risk measure}

Artzner (1999) \cite{Art1} introduced the definition of coherent risk measure by means of the following axiom:

\begin{definition}
A risk measure $\pi$ is considered coherent if satisfies the following property:
\begin{itemize}
\item \textbf{Traslation invariance}: for a riskless deterministic portfolio $L$ with fixed return $\alpha$ and for all $X \in \Gamma$ we have $\pi(X+L)=\pi(X)-\alpha$
\item \textbf{Subadditivity}: for all $(X_1,X_2) \in \Gamma$ we have $\pi(X_1 + X_2) \leq \pi(X_1) + \pi(X_2)$
\item \textbf{Positive Homogenity}: for all $\lambda >0$ and all $X \in \Gamma$, $\pi(\lambda X) = \lambda \pi(X)$
\item \textbf{Monotonicity}: for all $X,Y \in \Gamma$ with $X \leq Y$, we have $\pi(X) \leq \pi(Y)$
\end{itemize}
\end{definition}

\subsection{Coherence of allocation principle}

Denault (2001) \cite{Den1} extends the concept of coherence to the allocation principle establishing a set of definitions and axioms. 

We consider a set of \emph{q} portfolios. The relative allocated risk measures to be calculated represent a set $A$ of allocatoin problem. The following definition holds:

\begin{definition} \label{DEF:AllocationPrinciple}
An allocation principle is a function $\Pi: A \rightarrow \Re ^q$ that maps each allocation problem into a unique allocation:

\begin{equation}
\Pi(A)=\Pi 
\left(
\begin{bmatrix}
    \pi(X_1) \\       
    \vdots  \\
    \pi(X_q)
\end{bmatrix} 
\right)
=
\begin{bmatrix}
    \pi(X_1|X) \\       
    \vdots  \\
    \pi(X_q|X)
\end{bmatrix} 
\end{equation}
\\
\\
such that $\pi(X)=\sum\limits_{s=1}^q \pi(X_s|X)$ where $\pi(X_s|X)$ is the allocated risk measure for the sub-portfolio $s-th$.
\end{definition}

\begin{definition}
An allocation principle $\Pi$ is coherent if, for every allocation problem satisfies the followings three properties:

\begin{enumerate}
\item \textbf{No Undercut}

\[
\forall M \subseteq Q, \qquad \sum_{s \in M} \pi(X_s) \leq \pi(\sum_{s \in M} X_s)
\]

\item \textbf{Symmetry}: if by joining any subset $M \subseteq Q \ {i,j}$, portfolios \emph{i} and \emph{j} both make the same contribution to the risk capital, then $K_i = K_j$.

\item \textbf{Riskless allocation}: for a riskless deterministic portfolio $L$ with fixed return $\alpha$ we have that 
\[
\pi(L) = -\alpha
\]
\end{enumerate}
\end{definition}

\subsection{Euler allocation principle and \emph{RORAC} compatibility}

As in definition \ref{DEF:AllocationPrinciple} $\pi(X)=\sum\limits_{s=1}^q \pi(X_s|X)$
where, from an economic point of view, $\pi(X_s | X)$ $(s=1, ... , q)$ is the risk contribution net of diversification effect of the $q$-sub-portfolios. The knowledge of the risk contribution $\pi(X_s|X)$ enables to evaluate the risk return profile of each sub-portfolio by defining the random variables $RORAC_s$ (Return on Risk Adjusted Capital).
\begin{definition}
\label{RORAC}
Let $U$ and $U_s$ $(s=1,...,q)$ respectively the r.v. one-year wide portfolio income and \emph{s-th} sub-portfolio income so that $U=\sum\limits_{s=1}^q U_s$, we can define:
\begin{itemize}
\item the \emph{Return on Risk Adjusted Capital} of the wide portfolio of the company as:
\begin{equation}
RORAC(X)=\frac{U}{\pi(X)}
\end{equation}

\item the \emph{Return on Risk Adjusted Capital} of the sub-portfolio \emph{s-th} of the company as:
\begin{equation}
RORAC(X_s)=\frac{U_s}{\pi(X_s|X)}
\end{equation}
\end{itemize}
\end{definition}
It is important to note that the denominator of the RORAC must represent a capital net of diversification, if is not, the RORAC have not any economic sense. This depends on a particular property of the allocation methodology used to calculate the variables $\pi(X_s|X)$ named \emph{RORAC compatibility} introduced by Tasche (1999) \cite{Tas1}.
\begin{definition}
\label{DEF:RORAC_compatibility}
A risk capital contribution $\pi(X_s|X)$ to the overall risk contribution $\pi(X)$ is RORAC compatible if there are some $\epsilon>0$ such that:

\begin{equation}
RORAC(X_s)>RORAC(X) \Rightarrow RORAC(X + hX_s)>RORAC(X)
\end{equation}

for all $0<h<\epsilon$.
\end{definition}
Tasche (1999 and 2004) find that (\cite{Tas1} and \cite{Tas2}), if a RORAC compatibility capital allocation exists, it is uniquely determined by the Euler's principle\footnote{Note that we provide a version of the Euler's theorem slightly different to that of Tasche because we refer to a risk measure, the SCR, that is a differentiable function}.
\begin{lemma}[Euler's principle]
\label{LEMMA:Tasche_Euler_Theorem}
Let $\pi(X)$ be a risk measure and assume that it is a 1-degree homogeneous and continuously differentiable function. If there are risk contributions $\left[\pi(X_1|X), ... , \pi(X_q|X) \right]$ that are RORAC compatible, they are uniquely determined as:

\begin{equation}
\pi_{Euler}(X_i|X) = \pi(X_i) \cdot \frac{\partial \pi(X)}{\partial \pi(X_i)} \qquad i=1, ... , q
\end{equation}

This is called Euler's allocation principle of the risk measure $\pi(X)$ among the $q$-sub-portfolios. 
\end{lemma}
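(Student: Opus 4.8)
The plan is to split the statement into two independent claims: first, that the proposed formula actually defines an allocation, i.e. that its $q$ components sum to $\pi(X)$; and second, that RORAC compatibility singles this allocation out uniquely. The first claim is Euler's classical theorem on homogeneous functions. Viewing the total requirement as a function $\Phi(\pi(X_1),\dots,\pi(X_q))$ of the marginal requirements $z_i:=\pi(X_i)$ — which is precisely the situation under the Standard Formula aggregation — the hypotheses that $\Phi$ is continuously differentiable and positively homogeneous of degree $1$ let me differentiate the identity $\Phi(\lambda z)=\lambda\,\Phi(z)$ with respect to $\lambda$ and evaluate at $\lambda=1$, yielding the adding-up relation $\sum_{i=1}^q z_i\,\partial\Phi/\partial z_i=\Phi(z)$, that is $\sum_{i=1}^q \pi_{Euler}(X_i|X)=\pi(X)$. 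Hence the proposed components form an admissible allocation in the sense of Definition \ref{DEF:AllocationPrinciple}.

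For compatibility I would first record the bridge between the partial derivative in the formula and a marginal derivative of the risk along $X_i$: since $\pi$ is positively homogeneous, the perturbation $X\mapsto X+hX_i$ scales the $i$-th marginal as $z_i\mapsto(1+h)z_i$, so the chain rule gives $\frac{d}{dh}\pi(X+hX_i)\big|_{h=0}=z_i\,\partial\Phi/\partial z_i=\pi_{Euler}(X_i|X)$. I then differentiate $h\mapsto RORAC(X+hX_i)=\frac{U+hU_i}{\pi(X+hX_i)}$ at $h=0$; by the quotient rule and the previous identity this equals $\frac{U_i\,\pi(X)-U\,\pi_{Euler}(X_i|X)}{\pi(X)^2}$, whose sign is that of $\frac{U_i}{\pi_{Euler}(X_i|X)}-\frac{U}{\pi(X)}=RORAC(X_i)-RORAC(X)$. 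Thus $RORAC(X_i)>RORAC(X)$ forces a strictly positive derivative at $h=0$, and continuity of this derivative gives $RORAC(X+hX_i)>RORAC(X)$ for all sufficiently small $h>0$, which is exactly Definition \ref{DEF:RORAC_compatibility}. So the Euler components are RORAC compatible.

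Uniqueness is the delicate step and where I expect the main obstacle. Let $\tilde\pi(X_i|X)$ be any continuous RORAC-compatible contribution; note that the derivative of the genuine risk $\pi(X+hX_i)$ does not depend on the chosen allocation, so the sign computation above still involves $\pi_{Euler}(X_i|X)$ in the numerator, while $\tilde\pi$ enters only through $RORAC(X_i)=U_i/\tilde\pi(X_i|X)$. Because compatibility must hold whatever the income figures $U,U_i$ are, I would treat $U_i$ as a free parameter: if $\tilde\pi(X_i|X)\neq\pi_{Euler}(X_i|X)$ one can tune $U_i$ so that $RORAC(X_i)>RORAC(X)$ while $\frac{d}{dh}RORAC(X+hX_i)|_{h=0}\le0$ (or the mirror violation with the inequalities reversed), contradicting compatibility; requiring the two strict-inequality regions to coincide for every return level forces $\tilde\pi(X_i|X)=\pi_{Euler}(X_i|X)$. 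The care needed here — that Definition \ref{DEF:RORAC_compatibility} supplies only a one-sided implication valid for small $h$, that the boundary case $RORAC(X_i)=RORAC(X)$ must be reached by a limiting argument, and that it is the freedom in $U_i$ that encodes the ``for every return structure'' quantifier driving the contradiction — is the real content of the proof, the homogeneity and quotient-rule computations being routine by comparison.
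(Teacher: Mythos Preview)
The paper does not actually prove this lemma: it is stated as a result due to Tasche, with the only justification being the sentence after the lemma that ``the Euler allocation principle derives from the application to the risk measures considered of the well known Euler's Homogeneous Function Theorem.'' So you have supplied a genuine argument where the paper merely cites one. Your first two steps --- the adding-up identity via differentiating $\Phi(\lambda z)=\lambda\Phi(z)$, and the verification of RORAC compatibility through the quotient-rule computation of $\frac{d}{dh}RORAC(X+hX_i)\big|_{h=0}$ --- are correct and constitute exactly the content the paper gestures at with its reference to Euler's theorem.

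On uniqueness you have correctly located the real difficulty, and your self-diagnosis is accurate. With the paper's Definition~\ref{DEF:RORAC_compatibility} as written (a one-sided implication only), the ``mirror violation'' you invoke is not available: if $\tilde\pi(X_i|X)>\pi_{Euler}(X_i|X)$, then $RORAC(X_i)>RORAC(X)$ forces the derivative to be positive and no contradiction arises from that index. Tasche's original formulation is two-sided (overperforming units should be increased \emph{and} underperforming units decreased), which closes this gap; alternatively, if one also imposes the full-allocation constraint $\sum_i\tilde\pi(X_i|X)=\pi(X)$ from Definition~\ref{DEF:AllocationPrinciple}, then $\tilde\pi_j<\pi_{Euler,j}$ must hold for some $j$ and your one-sided contradiction applies there. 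Either fix completes your argument; you flagged precisely this issue, so the proposal is sound once one of these two ingredients is made explicit.
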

From a mathematical point of view, the Euler allocation principle derives from the application to the risk measures considered of the well known Euler's Homogeneous Function Theorem.
\subsection{On the coherence of Euler allocation principle}

The Euler's allocation principle described in the previous subsection, is one of the most popular allocation method proposed in the literature. This is due for its suitable properties. In this way, a very important contribution is that of Buch et G. Dorfleitner (2008) \cite{BucDor}. From an axiomatic point of view, they study the relation between the properties of the Euler's allocation principle and those of the risk measure to which the allocation is applied. What they find is resumed in the following proposition.
\begin{proposition}
\label{PROP:Buch_Dorfleitner}

The Euler's allocation principle applied to a coherent risk measure has the properties of full allocation, “no undercut” and riskless allocation" so it is coherent in the sense of Denault (2001) \cite{Den1}.
\end{proposition}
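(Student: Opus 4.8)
The plan is to verify directly that the three Denault coherence conditions — full allocation, ``no undercut'', and riskless allocation — hold for the Euler allocation $\pi_{Euler}(X_i|X)$ whenever $\pi$ is coherent, the unifying idea being that coherence forces an auxiliary ``portfolio weight'' function to be convex. Throughout I would work with
\begin{equation}
f(h) = \pi\Bigl(\sum_{j=1}^q h_j X_j\Bigr), \qquad h=(h_1,\dots,h_q)\in\Re^q,
\end{equation}
so that $f(\mathbf{1})=\pi(X)$ and the Euler allocation is the partial derivative $\pi_{Euler}(X_i|X)=\partial_{h_i} f$ evaluated at the all-ones vector $h=\mathbf{1}$. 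Full allocation is then immediate: since $\pi$ is positively homogeneous of degree one, so is $f$, and the homogeneous function theorem underlying Lemma \ref{LEMMA:Tasche_Euler_Theorem} gives $\sum_{i} h_i\,\partial_{h_i} f(h)=f(h)$, which at $h=\mathbf{1}$ reads $\sum_{i}\pi_{Euler}(X_i|X)=\pi(X)$.

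The central step is ``no undercut'', and the main obstacle is converting the abstract coherence axioms into a usable inequality. The key observation is that positive homogeneity together with subadditivity make $f$ sublinear, hence convex: for $t\in[0,1]$ one checks $f(tu+(1-t)v)\le \pi(t\sum_j u_j X_j)+\pi((1-t)\sum_j v_j X_j)=t f(u)+(1-t)f(v)$. Because $\pi$, and therefore $f$, is continuously differentiable by hypothesis, I can apply the first-order (supporting hyperplane) inequality for convex functions, $f(v)\ge f(\mathbf{1})+\langle \nabla f(\mathbf{1}),\,v-\mathbf{1}\rangle$, at the indicator vector $v=\mathbf{1}_M$ of an arbitrary subset $M\subseteq Q$. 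Expanding the inner product and substituting the full-allocation identity $\langle \nabla f(\mathbf{1}),\mathbf{1}\rangle=f(\mathbf{1})$ collapses the right-hand side to $\sum_{s\in M}\pi_{Euler}(X_s|X)$, while the left-hand side is exactly $f(\mathbf{1}_M)=\pi(\sum_{s\in M}X_s)$; this is precisely the ``no undercut'' bound for the allocated capitals.

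Finally, I would settle riskless allocation using translation invariance. If a sub-portfolio $X_i=L$ is deterministic with fixed return $\alpha$, then perturbing only its weight gives $f=\pi(Y)-h_i\alpha$ with $Y=\sum_{j\ne i}h_j X_j$, whence $\partial_{h_i} f=-\alpha$ and therefore $\pi_{Euler}(X_i|X)=-\alpha=\pi(L)$, using positive homogeneity to note $\pi(0)=0$ for consistency. Having verified all three axioms, the Euler allocation is coherent in the sense of Denault (2001) \cite{Den1}. I expect the only genuine delicacies to be the justification that $\pi_{Euler}(X_i|X)$ coincides with the gradient of $f$ at $\mathbf{1}$ and that differentiability is legitimately available wherever the supporting-hyperplane argument is invoked; once convexity is established, the passage to the ``no undercut'' inequality is routine.
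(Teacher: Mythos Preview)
The paper does not actually prove this proposition: it is stated as a result of Buch and Dorfleitner (2008) \cite{BucDor} and quoted without argument, so there is no ``paper's own proof'' to compare against. Your proposal therefore supplies what the paper merely cites.

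On the substance, your argument is correct and is essentially the standard Buch--Dorfleitner line. Defining $f(h)=\pi(\sum_j h_j X_j)$, you use positive homogeneity for full allocation via Euler's theorem, subadditivity plus homogeneity to get sublinearity and hence convexity of $f$, and then the first-order convexity inequality at $v=\mathbf{1}_M$ to obtain $\pi(\sum_{s\in M}X_s)\ge\sum_{s\in M}\pi_{Euler}(X_s|X)$, which is the ``no undercut'' bound for the \emph{allocated} capitals (the paper's own statement of the axiom has a typo, writing $\pi(X_s)$ where $\pi(X_s|X)$ is intended). The riskless-allocation step via translation invariance is also clean: $h_i L$ is deterministic with return $h_i\alpha$, so $f$ is affine in $h_i$ with slope $-\alpha$.

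Two small remarks. First, the paper's Lemma~\ref{LEMMA:Tasche_Euler_Theorem} writes the Euler contribution as $\pi(X_i)\cdot\partial\pi(X)/\partial\pi(X_i)$ rather than $\partial_{h_i} f(\mathbf{1})$; you rightly flag that identifying the two is a point that needs a word, though in the differentiable, positively homogeneous setting they coincide. Second, the paper's Definition of coherent allocation also lists a \emph{Symmetry} axiom, which neither the proposition nor your proof addresses; this is an inconsistency in the paper's own statement, not a gap in your argument, since you prove exactly the three properties the proposition claims.
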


This result has a main role for our research because we will use it to prove that the allocation methodology that we find to calculate the allocated SCR in the Solvency II Standard Formula framework by means of the Euler's principle, is coherent in the sense of Denault (2001) \cite{Den1}.

This, united to the RORAC compatibility ensured by the Euler's principle\footnote{see Tasche (2008) \cite{Tas2}}, and the closed formulas implies very suitable properties and practical implications for our results.

\section{Solvency II standard formula: hypothesis and framework} \label{SEC:hp FS}

The Solvency II directive\footnote{Directive 2009/138/EC of the European parliament and of the council 25 November 2009 on the taking-up and pursuit of the business of Insurance and Reinsurance (Solvency II)} provide that the insurance companies have to calculate their regulatory capital requirement, named Solvency Capital Requirement (hereafter referred as SCR), by means of a risk based methodology. From a practical point of view, they can choose between a standard formula provided by EIOPA or to produce him self an internal model. In the following we take into account only for the case that the SCR is calculated with the standard formula (hereafter referred as FS). The FS provides that the Company should calculate the SCR through the modular approach that will be defined.
\\
\\
The risk-based modular approach considered in the Solvency II framework provides that the company has to consider the global risk which is exposed to by dividing it into single components. These components are related with different sources of risk (hereafter named "risk") like reserve risk, mortality risk, interest rate risk e so on. 
The modular scheme considers $n$ macro risks. The generic macro risk $i-th$ ($i=1, ... , n$) is composed by $m_i$ micro risk. We use the following notation for all variables will be defined: the first digit of the subscript identifies the macro-risk and is from $1$ to $n$, the second one identifies the micro-risk and is from $1$ to $m_i$ (where $i$ identify the overlying macro-risk). \\ \\
We define a set of random variable $\Psi$ on the probability space $[\Omega, \Im, \emph{\textbf{P}}]$. Let $L_{ij} \in \Psi$ (with $ij=i1, ... ,im_i$) the random variable that describe losses, over an annual time horizon, associated  with the micro-risk $ij-th$ and let $Y_{ij} = L_{ij} - E(L_{ij})$ the respective r.v. unexpected losses. The generic macro risk $i-th$ depends of a random variable $Y_i=\sum\limits_{j}^{m_i} Y_{ij}$. 
The total risk of the company\footnote{We not consider some parts of the modular scheme defined in the FS like the adjustment for deferred taxes and adjustment for loss absorbing capacity of technical provision. This because of their effect is measurable after calculating the SCR and their allocation depends not of the aggregation scheme but of particular consideration made by the Company.} is described with the random variable $Y=\sum\limits_{i=1}^n Y_i=\sum\limits_{i=1}^n\sum\limits_{j=1}^{m_i} Y_{ij}$. 

\begin{definition}[Standard Formula] The solvency II capital requirement of the company is defined by means of the specific risk measure on $Y$ following defined: 
\label{def:SF}
\begin{enumerate}[I]
\item $SCR_{ij}$ is the capital requirement referred to the $ij-th$ micro-risk defined as:
 \begin{equation}
SCR_{ij} = VaR_{99.5\%}(Y_{ij})
\end{equation}

is approximated by means of specific formulas provided by EIOPA.

 \item $SCR_{i}$ is the capital requirement referred to the $i-th$ macro-risk calculated by aggregating the underlying micro-risk:
 
\begin{equation} \label{EQ:SCR_ix}
SCR_{i}=\sqrt{\sum_{x=1}
^{m_i} \sum_{y=1}
^{m_i} SCR_{ix} \cdot SCR_{iy} \cdot \rho_{ix,iy}} 
\end{equation}

where $\rho_{ix,iy}$ represents the linear correlation coefficients provided by EIOPA.

 \item $SCR$ is the overall capital requirement of the company and is calculated by aggregating the underlying macro-risk:
 
\begin{equation} \label{EQ:SCR_i}
SCR= \sqrt{\sum_{i=1}
^{n} \sum_{w=1}
^{n} SCR_{i} \cdot SCR_{w} \cdot \rho_{i,w}} 
\end{equation}
\\
where $\rho_{i,w}$ represents the linear correlation coefficients provided by EIOPA.
\end{enumerate}
\end{definition}

Note that the square-root aggregation formula (\ref{EQ:SCR_ix} and \ref{EQ:SCR_i}) implies that the r.v. $Y_{ij}$ ($j=1,...,m_i$) are jointly normal distributed and linearly correlated \footnote{For the sake of clarity, some further specification are needed. In the FS there is not any explicit hypothesis for the distribution of the losses for the micro and macro risk, but the underlying assumption of linear correlation and normal distribution is needed. These assumptions are very strong because, in the insurance problems, the linear correlation and the non skewed distribution hypothesis are not always consistent with the nature of the phenomena considered. Indeed, Sandstrom (2007) \cite{Sandstrom1} specified that, for skewed distribution, the normal approximation can implies an incorrect estimation of the SCR and propose a method to transform, by Cornish-Fisher expansion, the quantile distribution from a skewed into a standard normal distribution. Instead, from a practical point of view, the standard formula of Solvency 2, avoid this problem through a very strong and prudential simplification that is by overestimating the confidence level for the calculation of the value at risk for some micro-risk variable. In particular, for a normal distribution holds that:
\[
VaR_{99.5\%}(Y)=\alpha \cdot \sigma_X
\]

where $\alpha=F_X^{-1}(0.995)=2.576$ with $F$ the cumulative distribution function of standard normal random variable. In the standard formula, e.g. for the premium-reserve risk for non-life portfolios, $\alpha = 3$ that corresponds to a  $VaR_{99.87\%}(Y)$ avoiding in this manner the skewness under estimation due to normally assumption. Furthermore EIOPA itself has specified  that the correlation parameters provided for the FS, are estimated to reduce the distorting effect through the following formulation:
\[
\min_{\rho} VaR(X+Y)^2 - VaR(X)^2 - VaR(X)^2 -2 \rho VaR(X) VaR(Y)
\]} so that the SCR is a coherent risk measure.
\\
\begin{figure}[htbp]
\centerline{\includegraphics[width=14cm]{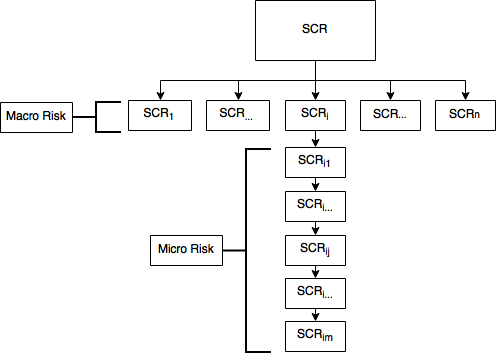}}
\caption{Aggregation scheme}
\end{figure} 
\\
\clearpage
\section{Capital allocation} \label{sec:CORPO}

With reference to the SCRs of the macro-risk and micro-risk, moving from Lemma \ref{LEMMA:Tasche_Euler_Theorem} the allocation formulas are obtained.

\begin{theorem}[SCR macro-risk allocation] \label{MacroRiskTheorem}
In the case of Solvency II Standard Formula, the RORAC compatible allocation of the overall SCR among the constituents macro-risk is uniquely determine as:

\begin{equation}\label{eq:SCR_Eul1} 
  SCR(Y_i|Y)= SCR_i \cdot \frac{\sum\limits_{w=1}^n  SCR_w \cdot \rho_{i,w}}{SCR_Y}
\end{equation}

where $SCR(Y_i|Y)$ is the allocated \emph{i-th} macro-risk.

\end{theorem}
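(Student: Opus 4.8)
The plan is to apply the Euler allocation principle of Lemma~\ref{LEMMA:Tasche_Euler_Theorem} directly, treating the overall capital requirement $SCR_Y \equiv SCR$ defined in \eqref{EQ:SCR_i} as a function of the macro-risk capital requirements $SCR_1, \ldots, SCR_n$. These play the role of the arguments $\pi(X_i)$ in the lemma, so that the allocated $i$-th macro-risk is
\[
SCR(Y_i|Y) = SCR_i \cdot \frac{\partial SCR_Y}{\partial SCR_i}.
\]
The entire content of the theorem then reduces to evaluating this partial derivative in closed form and recognising the resulting expression as \eqref{eq:SCR_Eul1}.

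First I would verify that the hypotheses of Lemma~\ref{LEMMA:Tasche_Euler_Theorem} are met. Writing $SCR_Y = \sqrt{g}$ with $g(SCR_1,\ldots,SCR_n) = \sum_{i=1}^n \sum_{w=1}^n SCR_i\,SCR_w\,\rho_{i,w}$, the function $g$ is a homogeneous quadratic form in the vector $(SCR_1,\ldots,SCR_n)$, so $SCR_Y=\sqrt{g}$ is positively homogeneous of degree $1$. Because the correlation matrix $[\rho_{i,w}]$ is symmetric and positive definite (consistent with the normality and linear-correlation hypotheses of Section~\ref{SEC:hp FS}, under which $SCR$ is coherent) and the $SCR_i$ are strictly positive, we have $g>0$, so $SCR_Y$ is continuously differentiable on the relevant domain. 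Both requirements of the lemma — degree-one homogeneity and continuous differentiability — therefore hold, and since the SCR is coherent the RORAC-compatible allocation exists and is uniquely given by the Euler formula.

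Next I would compute $\partial SCR_Y/\partial SCR_i$ via the chain rule, $\partial SCR_Y/\partial SCR_i = \tfrac{1}{2\sqrt{g}}\,\partial g/\partial SCR_i$. The only point requiring care is differentiating the symmetric double sum: since $\rho_{i,w}=\rho_{w,i}$, the variable $SCR_i$ appears both as the first and as the second factor, and the two contributions combine to give
\[
\frac{\partial g}{\partial SCR_i} = 2\sum_{w=1}^n SCR_w\,\rho_{i,w}.
\]
Substituting $\sqrt{g}=SCR_Y$ cancels the factor of $2$, so that $\partial SCR_Y/\partial SCR_i = \left(\sum_{w=1}^n SCR_w\,\rho_{i,w}\right)/SCR_Y$; multiplying by $SCR_i$ yields exactly \eqref{eq:SCR_Eul1}.

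The computation itself is elementary, so the main things to get right are the bookkeeping of the factor $2$ arising from the symmetric quadratic form, and the conceptual justification that the correct argument with respect to which the Euler derivative is taken is the aggregated quantity $SCR_i$ rather than the underlying loss variable $Y_i$. As a consistency check one can verify full allocation, which Euler's homogeneous function theorem and Proposition~\ref{PROP:Buch_Dorfleitner} predict:
\[
\sum_{i=1}^n SCR(Y_i|Y) = \frac{1}{SCR_Y}\sum_{i=1}^n\sum_{w=1}^n SCR_i\,SCR_w\,\rho_{i,w} = \frac{SCR_Y^2}{SCR_Y} = SCR_Y,
\]
confirming that the allocated macro-risks sum back to the overall SCR.
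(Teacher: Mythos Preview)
Your proposal is correct and follows essentially the same route as the paper: apply Lemma~\ref{LEMMA:Tasche_Euler_Theorem} to write $SCR(Y_i|Y)=SCR_i\cdot\partial SCR_Y/\partial SCR_i$ and then compute the partial derivative of the square-root aggregation formula~\eqref{EQ:SCR_i}. The paper's own proof is terser---it simply states the derivative without the intermediate chain-rule step, the verification of homogeneity and differentiability, or the full-allocation check---so your version is a strictly more detailed rendering of the same argument.
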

\begin{proof}
From Lemma \ref{LEMMA:Tasche_Euler_Theorem} (Euler's allocation principle) holds that:
\begin{equation}
SCR_{Euler}(Y_i|Y) = SCR_i \cdot \frac{\partial SCR_Y}{\partial SCR_i} 
\end{equation}
\\
\\
where the partial derivative is:

 \begin{equation}
\frac{\partial SCR_Y}{\partial SCR_i}= \frac{\sum\limits_{w=1}^n  SCR_w \cdot \rho_{i,w}}{SCR_Y} 
\end{equation}
 
and so:

\begin{equation}
  SCR(Y_i|Y)= SCR_i \cdot \frac{\sum\limits_{w=1}^n  SCR_w \cdot \rho_{i,w}}{SCR_Y}
\end{equation}
\end{proof}
Moving from theorem \ref{MacroRiskTheorem} it is possible to reach a similar result for the micro-risk allocation. It is useful to define the variable {Allocation Ratio} as: 
\begin{equation}
AR_i = \frac{SCR(Y_i|Y)}{SCR_i}
\end{equation}
\\
\\
\begin{theorem}[SCR micro-risk allocation] \label{MicroRiskTheorem}
In the case of Solvency II Standard Formula, the RORAC compatible allocation of the macro-risk SCR $i-th$ $(i=1,...,n$) among the micro-risk $ix-th$ $(x=1,...,m_i)$ is uniquely determined as:

\begin{equation}\label{eq:SCR_Eul2} 
  SCR(Y_{ix}|Y,Y_i)= SCR_{ix} \cdot \frac{\sum\limits_{y=1}^{m_i}  SCR_{iy} \cdot \rho_{ix,iy}}{SCR_i} \cdot AR_i
\end{equation}
where the variable $SCR(Y_{ix}|Y,Y_i)$ is the allocated $ix-th$ $(x=1,...,m_i)$ micro-risk. 
\end{theorem}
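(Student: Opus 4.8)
The plan is to obtain the micro-risk allocation by applying the Euler principle of Lemma~\ref{LEMMA:Tasche_Euler_Theorem} directly at the micro level and then to recognise the macro-level factor as the allocation ratio $AR_i$ already computed in Theorem~\ref{MacroRiskTheorem}. The key structural observation is that $SCR_Y$ depends on the micro-risk charges $SCR_{ix}$ only through the intermediate macro charges $SCR_i$: each level of the aggregation in (\ref{EQ:SCR_ix})--(\ref{EQ:SCR_i}) is the square root of a positive quadratic form, hence $1$-homogeneous and, away from the origin, continuously differentiable, and the composition of $1$-homogeneous maps is again $1$-homogeneous. Thus $SCR_Y$, viewed as a function of the full family $\{SCR_{ix}\}$, satisfies the hypotheses of Lemma~\ref{LEMMA:Tasche_Euler_Theorem}, so the RORAC-compatible allocation down to the micro level is uniquely
\begin{equation}
SCR(Y_{ix}|Y) = SCR_{ix}\cdot\frac{\partial SCR_Y}{\partial SCR_{ix}}.
\end{equation}

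Next I would evaluate this derivative by the chain rule, exploiting the nested dependence $SCR_Y=SCR_Y(SCR_1,\dots,SCR_n)$ and $SCR_i=SCR_i(SCR_{i1},\dots,SCR_{im_i})$, and noting that $SCR_{ix}$ enters only the single macro charge $SCR_i$, so no cross terms survive:
\begin{equation}
\frac{\partial SCR_Y}{\partial SCR_{ix}} = \frac{\partial SCR_Y}{\partial SCR_i}\cdot\frac{\partial SCR_i}{\partial SCR_{ix}}.
\end{equation}
The first factor is exactly the macro-level derivative already obtained in the proof of Theorem~\ref{MacroRiskTheorem}, namely $\partial SCR_Y/\partial SCR_i=\left(\sum_{w} SCR_w\,\rho_{i,w}\right)/SCR_Y$. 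The second factor is computed identically one level down: differentiating (\ref{EQ:SCR_ix}) gives $\partial SCR_i/\partial SCR_{ix}=\left(\sum_{y} SCR_{iy}\,\rho_{ix,iy}\right)/SCR_i$.

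To conclude, I would read off from the definition $AR_i=SCR(Y_i|Y)/SCR_i$ together with Theorem~\ref{MacroRiskTheorem} that the macro-level factor is precisely the allocation ratio, $\partial SCR_Y/\partial SCR_i=AR_i$. Substituting the two derivatives into the Euler formula then yields
\begin{equation}
SCR(Y_{ix}|Y,Y_i) = SCR_{ix}\cdot\frac{\sum_{y=1}^{m_i} SCR_{iy}\,\rho_{ix,iy}}{SCR_i}\cdot AR_i,
\end{equation}
which is (\ref{eq:SCR_Eul2}); the notation $SCR(Y_{ix}|Y,Y_i)$ records that the charge is the Euler micro-allocation within macro-risk $i$ rescaled by its macro allocation ratio.

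I expect the only genuinely delicate point to be the justification of this nested Euler step, i.e. that allocating $SCR_Y$ directly to the micro charges agrees with first allocating $SCR_i$ among its micro-risks and then weighting by $AR_i$. This rests on the chain rule together with the two-fold $1$-homogeneity established in the first paragraph, and it is worth confirming by the internal-consistency check $\sum_{x=1}^{m_i} SCR(Y_{ix}|Y,Y_i)=AR_i\,SCR_i=SCR(Y_i|Y)$, which uses the identity $\sum_{x}\sum_{y} SCR_{ix}SCR_{iy}\rho_{ix,iy}=SCR_i^2$ from (\ref{EQ:SCR_ix}). This guarantees that the micro-allocations nest correctly into the macro-allocations and hence reproduce full allocation of the total $SCR$.
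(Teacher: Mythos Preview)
Your proposal is correct and follows essentially the same route as the paper: apply the Euler principle at the micro level, use the chain rule $\partial SCR_Y/\partial SCR_{ix}=(\partial SCR_Y/\partial SCR_i)(\partial SCR_i/\partial SCR_{ix})$, compute each factor from the square-root aggregation formulas, and identify the macro factor as $AR_i$. Your added remarks on $1$-homogeneity, the absence of cross terms, and the consistency check $\sum_x SCR(Y_{ix}|Y,Y_i)=SCR(Y_i|Y)$ are sound elaborations that the paper leaves implicit.
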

\begin{proof}
From Lemma \ref{LEMMA:Tasche_Euler_Theorem} (Euler's allocation principle) we have that: 
\begin{equation}
SCR(Y_{ix}|Y,Y_i)=SCR_{ix} \cdot \frac{\partial SCR_Y}{\partial SCR_{ix}}
\end{equation}

By means elementary algebra holds that:

 \begin{equation} \begin{split}
\frac{\partial SCR_Y}{\partial SCR_{ix}} &= \\ \\ &= \frac{\partial SCR_Y}{\partial SCR_{i}} \cdot \frac{\partial SCR_i}{\partial SCR_{ix}} = \\ \\ &= \frac{\sum\limits_{w=1}^n  SCR_w \cdot \rho_{i,w}}{SCR_Y} \cdot \frac{\sum\limits_{y=1}^{m_i}  SCR_{iy} \cdot \rho_{ix,iy}}{SCR_i} \cdot \frac{SCR_i}{SCR_i} \\ \\ &= \frac{\sum\limits_{y=1}^{m_i}  SCR_{iy} \cdot \rho_{ix,iy}}{SCR_i} \cdot AR_i
\end{split}\end{equation}

so that:

\begin{equation}
  SCR(Y_{ix}|Y,Y_i)= SCR_{ix} \cdot \frac{\sum\limits_{y=1}^{m_i}  SCR_{iy} \cdot \rho_{ix,iy}}{SCR_i} \cdot AR_i
\end{equation}
\end{proof}
The theorems \ref{MacroRiskTheorem} and \ref{MicroRiskTheorem} enables to conclude that, under assumptions (\ref{def:SF}), the RORAC compatible and coherent allocation of SCR is uniquely determined by means the Euler's principle and can be expressed by means of the closed expressions reported.

\section{Procedure for underwriting policy evaluation and risk appetite measurement} \label{sec:RISK APPETITE}
The risk based approach for the Solvency II capital requirement calculation enables insurance companies to evaluate their profitability taking into account for the capital absorption of the each sub-portfolio. Furthermore, the Solvency II directive requires that insurance undertaking have to evaluate their underwriting and reinsurance policies and to define the limits for the risk appetite.
In order to do it, we propose to lead back the problem to the classical portfolio theory. In this way we show that it is possible to use the same integrate framework for all the named evaluation. In particular, we consider a mean-variance model on the sub-portfolios RORAC\footnote{In the previous section RORAC compatible SCR allocation among risk are provided. To evaluate the sub-portfolios' RORAC, it is necessary to allocate the micro-risk among sub-portfolios. We not consider this problem in the paper but it can be done starting from the specific methodologies provided in the standard formula for the calculation of the several $SCR_{ix}$.}. 
\\

To evaluate different underwriting and reinsurance strategies according with the defined risk appetite, we propose the following optimization problem:

\begin{equation*}
\begin{aligned}
& \underset{P,R}{\text{maximize}}
& \mathrm{E}(RORAC) \\ \\
& \text{subject to}
&  \beta<SCR<\gamma \\
&& \delta<P<\epsilon \\
&& CV<\bar{\alpha} \\
&& const. on R
\end{aligned}
\end{equation*}

where:

\begin{description}

\item[-] $E(RORAC)$ is the expected global \emph{RORAC} of the company

\item[-] CV is the vector with coefficient of variation of the RORAC:

\begin{center}
\begin{tabular}{c}
$CV = 
\begin{bmatrix}
    \frac{\sigma(RORAC_1)}{E(RORAC_1)} \\   
    \vdots  \\
    \frac{\sigma(RORAC_n)}{E(RORAC_n)}
\end{bmatrix}
$
\end{tabular}
\end{center} 

\item[-] P is the vector of LoBs premium. $\delta$ and $\epsilon$ are the contraints for the premium depending on the commercial strategy of the Company 
\item[-] R is reisurance program subjected to qualitative and quantitative contraints
\item[-] $\alpha$ is risk appetite limit 

\item[-] $\beta$ and $\gamma$ are the bounds imposed to the overall SCR to limit the solution to wich compatible with the capital availabilities of the Company
\end{description}
From a practical point of view it is sufficient to test the realistic scenario of underwriting (given the commercial power) and reinsurance (given the market offer) and choose the global strategy that are optimal. 
In the following table, we show a numerical example based on a risk profile measurement RORAC compatible derived from an anonymous non-life company data base:
\\
\begin{center}
\begin{tabular}{c c c c}
LoB & $E(RORAC_i)$ & $\sigma (RORAC_i)$ & $SCR^A_i$  \\ 	\hline
Medical Expenses & 14,69 \% & 2,6 \% & 1.412 \\
Income Protection & -0,91 \% & 6,5 \% & 3.550 \\  
Motor vehicle liability & 9,2 \% & 11,6 \% & 4.187 \\  
Other motor & 5,94 \% & 1,8 \% & 2.977 \\  
Marine, aviation and transport & 3,95 \% & 1,4 \% & 2.115 \\  
Fire and other damage to property & 14,17 \% & 5,3 \% & 1.577 \\  
General liability & 14,6 \% & 4,4 \% & 3.366 \\
Credit and suretyship & 1,00 \% & 0,70 \% & 816 \\  
Legal expenses & 13,57 \% & 6,4 \% & 2.647 \\  
Assistance & 11,22 \% & 10,3 \% & 1.070 \\  
Miscellaneous financial loss & 14,61 \% & 5,6 \% & 4.573 \\  
Total & 9,5 \% & 5,7 \% & 28.294 \\
\end{tabular}
\end{center} 
The data above can be reported in the figure \ref{RiskReturn} that represents the contribution of each LoB risk performance to the company's risk situation.
\begin{figure}[htbp]
\centerline{\includegraphics[width=15cm]{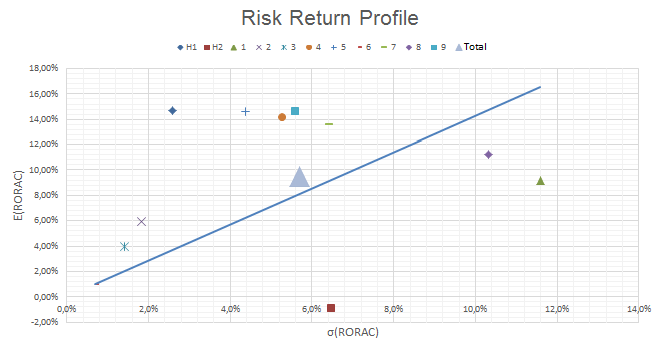}}
\caption{Risk-return profile}\label{RiskReturn}
\end{figure}
\clearpage
\
\

\section{Conclusion}
In this paper we have shown that, under solvency II standard formula framework, is possible to obtain a Solvency Capital Requirement allocation among micro and macro risks that is coherent in the sense of Denault \cite{Den1} and RORAC compatible \cite{Tas1}. We demonstrated the results by means of the Euler's allocation principle.
\\
Then, given the allocated SCR, we have provided a procedure to evaluate the underwriting and reinsurance policies and to determine the risk appetite of the stakeholder by means of a RORAC index and collocating the argument under the classical portfolio theory. 
\\
Some possible developments can be the construction of a model for the micro risk allocation among sub-portfolios for all the case in which is not possible to apply the formulas we provided (eg. the allocation of the interest rate risk among life LoBs etc.). This may have a very strong effect in the real analysis due for its straightforward possibility of application eg. the valuation of underwriting strategy and reinsurance strategy.

\clearpage



\end{document}